\newtheorem{definition}{\bf Definition}
\newtheorem{proposition}{\bf Proposition}
\newcommand\fs@ruled@notop{\def\@fs@cfont{\bfseries}\let\@fs@capt\floatc@ruled
  \def\@fs@pre{}%
  \def\@fs@post{\kern2pt\hrule\relax}%
  \def\@fs@mid{\hrule height.8pt depth0pt \kern2pt}%
  \let\@fs@iftopcapt\iftrue}
\renewcommand\fst@algorithm{\fs@ruled@notop}
\newlength{\aligntop}
\newlength{\alignbot}
\renewenvironment{align}{%
\vspace{\aligntop}
\start@align\@ne\st@rredfalse\m@ne
}{%
\math@cr \black@\totwidth@
\egroup
\ifingather@
\restorealignstate@
\egroup
\nonumber
\ifnum0=`{\fi\iffalse}\fi
\else
$$%
\fi
\ignorespacesafterend%
\vspace{\alignbot}\par\noindent
}
\begin{document}

\addtolength{\textfloatsep}{-8pt}

\title{Network Formation in the Sky: Unmanned Aerial Vehicles for Multi-hop Wireless Backhauling\vspace{-0.3cm}}

\author{
\IEEEauthorblockN{Ursula Challita\IEEEauthorrefmark{1} and Walid Saad\IEEEauthorrefmark{2}}
\IEEEauthorblockA{\IEEEauthorrefmark{1}\small
School of Informatics,
The University of Edinburgh, Edinburgh, UK. Email: ursula.challita@ed.ac.uk.}
\IEEEauthorblockA{\IEEEauthorrefmark{2}\small Wireless@VT, Bradley Department of Electrical and Computer Engineering, Virginia Tech, Blacksburg, VA, USA. Email: walids@vt.edu.}\vspace{-0.7cm}
}

\maketitle
\begin{abstract}
\boldmath
To reap the benefits of dense small base station (SBS) deployment, innovative backhaul solutions are needed in order to manage scenarios in which high-speed ground backhaul links are either unavailable or limited in capacity. In this paper, a novel backhaul scheme that utilizes unmanned aerial vehicles (UAVs) as an on-demand flying network linking ground SBSs and the core network is proposed. The design of the aerial backhaul scheme is formulated as a network formation game among UAVs that seek to form a multi-hop backhaul network in the air. To solve this game, a myopic network formation algorithm which reaches a pairwise stable network upon convergence, is introduced. The proposed network formation algorithm enables the UAVs to form the necessary multi-hop backhaul network
in a decentralized manner thus adapting the backhaul architecture to the dynamics of the network. 
Simulation results show that the proposed network formation algorithm achieves substantial performance gains in terms of both rate and delay reaching, respectively, up to $40$\% and $41$\% compared to the formation of direct communication links with the gateway node (for a network with $15$ UAVs).
\end{abstract}
\IEEEpeerreviewmaketitle
\vspace{-0.13cm}
\section{Introduction}
\vspace{-0.13cm}
The dense and viral deployment of small base stations (SBSs) is expected to lie at the heart of emerging 5G networks~\cite{backhaul}. However, to reap the benefits of SBS deployment, innovative backhaul solutions are needed, as SBSs may be deployed in adverse locations and rural areas in which backhaul access is either inexistent or strictly limited in capacity~\cite{mona}.

Several approaches have been recently proposed for SBS backhauling~\cite{mona, backhaul, omid, backhaul_UAV_2}. Such solutions include wired and wireless backhauling to and from core network aggregators, cooperation through anchor base stations, and multi-hop over short-range links~\cite{mona, backhaul}. Nevertheless, existing solutions do not account for scenarios in which the high-speed ground backhaul is either congested, unavailable or limited in capacity. In such scenarios, the backhaul connectivity of SBSs can become a bottleneck thus degrading the performance of the radio access network.
Therefore, a novel paradigm shift of backhaul network design for 5G networks and beyond is needed. One promising solution for such scenarios is to deploy unmanned aerial vehicles (UAVs) for providing backhaul connectivity to the SBSs~\cite{mohammad_D2D} and~\cite{mingzhe}. Due to their rapid and flexible deployment capabilities, mobility, ability to fly above obstacles, and relatively low cost, UAVs have received considerable interest for different applications in wireless communications, and in particular, as communication relays~\cite{IMP_relays_potential, moving_relay_1, moving_relay_2}.


In this regard, the authors in~\cite{backhaul_UAV} propose a vertical fronthaul/backhaul framework based on UAVs and free-space optics communication. In~\cite{backhaul_UAV_2}, the authors consider the use of UAVs as relays for backhaul connectivity of high altitude balloons in case of temporary failed links.
The authors in~\cite{IMP_relays_potential} consider the formation of a multi-hop relay system based on UAVs in order to extend the communication range of the ground network. In~\cite{moving_relay_1} and~\cite{moving_relay_2}, the authors consider a mobile relay network model based on UAVs, where a UAV serves as a resilient moving relay among the SBSs. Although the use of UAVs as communication relays has been explored in the literature~\cite{IMP_relays_potential, moving_relay_1, moving_relay_2}, these works are restricted to ad hoc, rather than cellular networks. Moreover, one challenging area which remains relatively unexplored is the formation of the aerial graph that connects UAVs to the core network. Indeed, the existing prior art does not provide an efficient scheme, in terms of achievable rate and delay, for the formation of a multi-hop aerial network for SBS backhauling.

The main contribution of this paper is thus to introduce a novel backhaul scheme that utilizes UAVs as an on-demand flying network linking the SBSs and the core network in scenarios where ground backhaul
is either unavailable or limited in capacity. The design of the aerial backhaul network is formulated as a network formation game in which the agents are the UAVs. The objective of the proposed game is to allow the UAVs to autonomously learn which air-to-air (A2A) and air-to-ground (A2G) links to form in order to guarantee the connectivity of the SBSs to the core network. In particular, we consider that the UAVs form a multi-hop aerial network in which each UAV can individually select the path that connects it to the backhaul gateway node through other UAVs while optimizing its own utility. To solve this game, we propose a dynamic network formation algorithm that is guaranteed to reach a pairwise stable network upon convergence. Moreover, to ensure an efficient backhauling process between the UAVs, we incorporate the notion of virtual force fields into our dynamic algorithm. In essence, virtual forces allow the UAVs to adjust their location dynamically based on the links they want to form. We show that, using the proposed algorithm, the UAVs are able to self-organize into a stable tree structure rooted at the gateway node. To our best knowledge, \emph{this is the first work that exploits the framework of network formation games for the design of a UAV-based multi-hop backhaul network.} Simulation results show that the proposed approach achieves significant rate and delay improvements.

The rest of this paper is organized as follows. In Section II, we present the system model. Section III describes the proposed network formation game. The proposed network formation algorithm is given in Section IV. In Section V, simulation results are analyzed. Finally, conclusions are drawn in Section VI.
\vspace{-0.12cm}
\section{System Model}
\vspace{-0.1cm}
Consider a network composed of a set $\mathcal{S}$ of $S$ SBSs and a set $\mathcal{J}$ of $J$ UAVs. We consider a system in rural areas, hotspots, or ultra dense cellular areas in which SBSs are located at adverse locations (e.g., at lamp posts or street levels), and a ground backhaul network that connects the SBSs to the core network is either unavailable or limited in capacity. To overcome such bottleneck, we propose the use of UAVs as a temporarily aerial backhaul network for the SBSs. Specifically, UAVs serve as a bridge among the SBSs and relay the traffic to a nearby gateway node (with core network access) or as an intermediate relay point between different backhaul transceivers. 

In our model, the UAVs are initially located based on the deployment approach given in~\cite{mohammad} and each UAV $j$ serves a set of $\mathcal{S}_j$ SBSs. Packet forwarding is supported for both uplink (UL) and downlink (DL) directions via a frequency division duplex (FDD) model thus allowing the flow of traffic to/from the SBSs from/to the core network through a gateway node. We consider the availability of one gateway node $n$ in a given area and assume that at least one UAV has access to this gateway node. Given that the communication range of low-altitude platform (LAP) UAVs is typically limited to a few hundred meters, after which the signal quality deteriorates~\cite{UAV_survey}, the formation of a multi-hop aerial network becomes necessary to extend the communication range of the network and provide service to SBSs that are located at distant or hard to reach areas where infrastructure does not exist. Consequently, a communication link with the infrastructure is formed through either UAV-to-UAV multi-hop links or a UAV-to-infrastructure data link.



\subsection{A2G and A2A channel models}
For our proposed model, we consider that UAVs transmit over the sub-6 GHz band for the A2G and A2A links. We adopt the free-space path loss model, $\xi$, given by~\cite{hourani}:
\begin{align}
\xi (\mathrm{dB})= 20 \mathrm{log}_{10} (d_{o,d}) + 20 \mathrm{log}_{10} (f_c) - 147.55,
\end{align}
\noindent where $f_c$ is the system center frequency (in Hz) and $d_{o,d}=\frac{\Delta h_{o,d}}{\mathrm{sin}\theta_{o,d}}$, is the Euclidean distance between an origin node $o$ and a destination node $d$ (in $m$); $\Delta h_{o,d} = z_o-z_d$ is the altitude difference between $o$ and $d$ and $\theta_{o,d}$ is the elevation angle.
The use of a free space propagation model is validated by the fact that LAP UAVs fly at an altitude of $\sim$100m.

We consider a probabilistic LoS and non-line-of-sight (NLoS) links for the A2G propagation channel as done in~\cite{mohammad}. In such a model, NLoS links experience higher attenuations due to the shadowing and diffraction loss. Therefore, the adopted path loss model between UAV $j$ and SBS $s$, $L_{j,s}$, is given by:
\begin{align}\label{pathloss}
\hspace{-1.2cm}L_{j,s} =
\begin{cases}
\xi_{j,s} + \eta_{\mathrm{LoS}},  & \text{LoS link},\\
\xi_{j,s} + \eta_{\mathrm{NLoS}}, & \text{NLoS link}.
\end{cases}
\end{align}
\noindent where $\eta_{\mathrm{LoS}}$ and $\eta_{\mathrm{NLoS}}$ correspond to the additional attenuation factor added to the free space propoagation model and due to the NLoS connection links, respectively. Here, the probability of LoS connection depends on the environment, density and height of buildings, the location of the UAV and the SBS and the corresponding elevation angle between them. The LoS probability is given by~\cite{hourani}:
\begin{align}
P_{j,s}^{\mathrm{LoS}}=\frac{1}{1+C \mathrm{exp}(-D[\theta_{j,s}-C])},
\end{align}
\noindent where $C$ and $D$ are constants which depend on the environment (rural, urban, dense urban, or others) and $\theta_{j,s}=\mathrm{sin}^{-1}(\frac{\Delta h_{j,s}}{d_{j,s}})$ is the elevation angle. Clearly, the probability of NLoS is $P_{j,s}^{\mathrm{NLoS}}=1-P_{j,s}^{\mathrm{LoS}}$. Therefore, the average path loss between UAV $j$ and SBS $s$, $\overline{L}_{j,s}$, is given by:
\begin{align}\label{average_pathloss}
\overline{L}_{j,s}= P_{j,s}^{\mathrm{LoS}}\cdot L_{j,s}^{\mathrm{LoS}} + P_{j,s}^{\mathrm{NLoS}}\cdot L_{j,s}^{\mathrm{NLoS}},
\end{align}

For the A2A links, we consider LoS links between different UAVs that wish to form a link. Therefore, the path loss model between UAV $j$ and UAV $i$, $L_{j,i}$, will be $L_{j,i}=\xi_{j,i} + \eta_{\mathrm{LoS}}$.

Based on the given channel model, the average signal-to-interference-plus-noise ratio (SINR) of the A2G link between
an origin node $o$ and a destination node $d$ (which can represent the
link between UAV $j$ and SBS $s$ or UAV $j$ and the gateway node $n$) in the DL or UL direction, $\Gamma_{o,d}$, is given by:
\begin{align}\label{SNIR}
\Gamma_{o,d}=\frac{P_{o,d}\cdot h_{o,d}}{\sum_{q=1, q\neq o}^O I_{q,d}+\sigma^2},
\end{align}

\noindent where $P_{o,d}$ is the transmit power of the origin node $o$ (which can
represent UAV $j$ or SBS $s$) to the destination node $d$, $h_{o,d}=1/10^{\overline{L}_{o,d}/10}$ is the channel gain between $o$ and $d$, $\sigma^2$ is the Gaussian noise and $\sum_{q=1, q\neq o}^O I_{q,d}$ is the total interference power at the destination node $d$ from other neighboring origin nodes $q$ (UAVs in the DL or SBSs in the UL) that are transmitting on the same channel, where $I_{q,d}=P_{q,d}/10^{\overline{L}_{q,d}/10}$. Therefore, based on Shannon's capacity, the achievable data rate of the A2G link can
be defined as $R_{o,d}=B_{o} \mathrm{log}_2(1+\Gamma_{o,d})$, where $B_{o}$ is the transmission bandwidth of the origin node $o$.



For the A2A links, we consider orthogonal channel allocation among all UAVs and, hence, the signal-to-noise ratio (SNR) between UAVs $j$ and $i$ is given by $\Gamma_{j,i}=\frac{P_{j,i}}{10^{L_{j,i}/10}\cdot \sigma^2}$. The capacity of the A2A link is $R_{j,i}=B_{j} \mathrm{log}_2(1+\Gamma_{j,i})$, where $B_{j}$ is the transmission bandwidth of UAV $j$.

Therefore, the achievable end-to-end rate of UAV $j$ along a multi-hop path $p_j$ in the DL direction, $R_{j}^{\mathrm{DL}}(p_j)$, corresponds to the minimum of the rates achievable over $N$ hops, as given below~\cite{multihop_capacity}:
\begin{align}\label{multihop_rate}
R_{j}^{\mathrm{DL}}(p_j) = \min_{\substack{n=1, \cdots, N}} R_{j_k, j_{k+1}}^{\mathrm{DL}},
\end{align}

\noindent where $R_{j_k, j_{k+1}}^{\mathrm{DL}}$ corresponds to the rate over link $j_k j_{k+1}$ in the DL direction. Similarly for $R_{j}^{\mathrm{UL}}(p_j)$, the achievable rate in the UL direction along path $p_j$.

\subsection{Problem formulation}
Given this network, our objective is to form an aerial backhaul network that allows each UAV $j$ to be connected to the gateway node $n$ via at most \emph{one} path, denoted as $p_j$, whenever this path exists. To realize this, we consider the formation of a \emph{bidirectional tree structure} rooted at the gateway node $n$. We let $\beta_{j,i} =1$ if link $ji$ is formed between UAV $j$ and UAV $i$, and $0$, otherwise and $\alpha_{j,n} =1$ if link $jn$ is formed between UAV $j$ and the gateway node $n$, and $0$, otherwise. Therefore, the centralized optimization problem can be formulated as follows:
\begin{align}\label{obj}
\max_{\mathbf{\boldsymbol{\alpha}_{j,n}, \boldsymbol{\beta}_{j,i}}} \sum_{j=1}^J \phi_j(p_j(\alpha_{j,n}, \beta_{j,i})), 
\end{align}
\vspace{-0.5cm}
\begin{align}\label{cons_1}
\hspace{-1cm}\mathrm{s. t.} \;\;\;\; \sum_{i=1, i\neq j}^J \beta_{j,i} + \alpha_{j,n} \geq 1 \;\;\forall j, \;\;\;\;\;\;\;\sum_{j=1}^J \alpha_{j,n} \geq 1,
\end{align}
\vspace{-0.5cm}
\begin{align}\label{cons_3}
\sum_{j=1}^J \Big(\alpha_{j,n} + \beta_{j,i} \Big)=J,
\end{align}
\vspace{-0.4cm}
\begin{align}\label{cons_4}
\alpha_{j,n}\in \{0,1\}, \;\; \beta_{j,i}\in \{0,1\} \;\;\;\forall j,n.
\end{align}
\noindent where $\phi_j(p_j(\alpha_{j,n}, \beta_{j,i}))$ corresponds to the utility function of UAV $j$ along its path $p_j$. (\ref{cons_1}) guarantees the formation of at least one path for each UAV $j$ to the gateway node $n$ (via direct or multi-hop). The left-most constraint in (\ref{cons_1}) ensures that UAV $j$ is connected to at least another UAV $i$ in the network or to the gateway node $n$. The right-most constraint in (\ref{cons_1}) guarantees that at least one UAV $j$ is connected to the gateway node $n$. (\ref{cons_3}) limits the maximum number of formed edges in the network to $J$, the number of available UAVs. Thus, (\ref{cons_1}) and (\ref{cons_3}), avoid the formation of cycles/loops in the network and therefore guarantee the
formation of a tree structure rooted at the gateway node. (\ref{cons_4}) represents the feasibility constraints. 

Note that, although a fully centralized
approach can be used to form the aerial backhaul network, the need for a distributed solution is desirable for our problem as it has several advantages. For instance,
a centralized control system suffers from the single-point failure problem and hence can be a
bottleneck for communication and security.
On the other hand, a distributed approach does not rely on a
single controller which, if compromised (due to malicious attacks or failures), can disrupt the operation of the entire network.
Further, a centralized approach requires the controller to communicate with all UAVs at all time. This might not be feasible in case the UAVs belong to different operators. Moreover, it can yield significant overhead and complexity, namely in networks with a rapidly changing environment due to the mobility of UAVs or incoming traffic load.
Note also that, due to the change in the network topology, a UAV might not be always reachable from the controller. Given these reasons, a distributed approach for network formation is needed, as proposed next.


\vspace{-0.3cm}
\section{Network Formation Game}
\vspace{-0.1cm}
Our main objective is to provide a distributed approach that can model the interactions among UAVs that seek to form the aerial multi-hop backhaul network.
For this purpose, we adopt the analytical framework of network formation games~\cite{jackson_wolinsky, walid_networkformation} which
involves a number of independent decisions makers that interact with each others in order to form
a suited network graph that connects them. For our proposed game, the agents correspond to the set of UAVs and the action space of each UAV is defined as the set of links which UAV $j$ can delete or form. Therefore, we consider an \emph{undirected} graph $G(\mathcal{V}, \mathcal{E})$
with $\mathcal{V}$ being the set of all vertices ($J$ UAVs and gateway node $n$) that will be present in the graph and $\mathcal{E}$ the set of all edges (links) that connect different pairs
of nodes. Each undirected link $ji \in \mathcal{E}$ between two nodes $j$ and $i$ corresponds to the DL/UL traffic flow between these nodes. Given any network $G(\mathcal{V}, \mathcal{E})$, the path $p_j$ from UAV $j$ to the gateway node $n$ is defined as a sequence of nodes $j_1, \cdots, j_K$ (in $\mathcal{V}$) such that $j_1 = j$, $j_K = n$ and each
undirected link $j_k j_{k+1} \in \mathcal{E}$ for each $k \in \{1, \cdots ,K-1\}$.

Therefore, each UAV $j$ aims at optimizing its own utility by selecting an appropriate path that connects it to the backhaul gateway node through other UAVs. Subsequently, the UAVs can act as source nodes transmitting the received SBSs/gateway node packets to the gateway node/SBSs through one or more hops in the formed graph. The resulting network graph $G$ is highly dependent on the goals, objectives, and incentives of each UAV. For instance, the number of hops can have an impact on the end-to-end delay, scalability, and throughput and therefore, can affect the performance of the resulting network. Next, we define the proposed utility function for our game.

\subsection{Utility function}
The utility of each UAV $j$ is a function of the network topology and the set of links formed among different UAVs. To this end, we propose a utility function that captures key metrics such as rate, delay, and number of relayed packets.

\subsubsection{Achievable data rate}
To maximize the performance of the SBSs, each UAV $j$ aims at maximizing its end-to-end achievable data rate along its path $p_j$ in the DL and UL directions, denoted as $R_{j}^{DL}(p_j,G)$ and $R_{j}^{UL}(p_j,G)$ respectively.

\subsubsection{Number of relayed packets}
To provide incentives for UAVs to route each others packets, each UAV $j$ is given a positive utility equivalent to the number of packets it transmits/relays successfully to/from the
gateway node via DL and UL, $P_{j}^{DL}(p_j,G)$ and $P_{j}^{UL}(p_j,G)$, respectively. These packets correspond to the packets originating from the set $\mathcal{S}_j$ of SBSs connected to UAV $j$ and from all other UAVs that are connected to UAV $j$.

\subsubsection{Delay cost}
We define $\tau_{j}(p_j,G)$ as the average delay
over path $p_j=\{j_1, \dots, j_K\}$ from SBS $s$ connected via UAV $j$ to the core network (or vice versa) given by~\cite{walid_networkformation}:
\begin{align}\label{delay_eqn}
\tau_{j}(p_j\textrm{,}G)\textrm{=}\sum_{\textrm{(}j_k\textrm{,} j_{k+1}\textrm{)}\in q_j}\textrm{\Big(}\frac{\Psi_{j_k\textrm{,}j_{k+1}}}{2\mu_{j_k\textrm{,}j_{k+1}}\textrm{(}\mu_{j_k\textrm{,}j_{k+1}}\textrm{-}\Psi_{j_k\textrm{,}j_{k+1}}\textrm{)}}\textrm{+}\frac{1}{\mu_{j_k\textrm{,}j_{k+1}}}\textrm{\Big)}\textrm{,}
\end{align}
\noindent where $\Psi_{j_k,j_{k+1}}=\Lambda_{j_k} + \Delta_{j_k}$ is the total packet arrival rate (packets/sec)
traversing link $(j_k, j_{k+1}) \in p_j$ between UAV $j_k$ and UAV $j_{k+1}$
and originating from the set $\mathcal{S}_j$ of SBSs connected to UAV $j_k$, $\Lambda_{j_k}$ and from all other UAVs that
are connected to UAV $j_k$, $\Delta_{j_k}$ (considering the Kleinrock approximation~\cite{kleinrock}).
$\Lambda_{j}$ is defined as $\Lambda_{j}=\sum_{s\in \mathcal{S}_j}\lambda_s$,
where $\lambda_s$ corresponds to the average arrival rate of the traffic of SBS $s$ and $\Delta_{j_k}$ is defined as $\Delta_{j_k}=\sum_{i\in A_{j_k}}\Lambda_{i}$
where $A_{j_k}$ is the set of UAVs that have a link formed with UAV $j$. $\mu_{j_k,j_{k+1}}=R_{j_k, j_{k+1}}/\upsilon$
is the service rate over link $(j_k,j_{k+1})$ where $R_{j_k, j_{k+1}}$ is the rate of the direct transmission between UAV $j$ and UAV $j+1$ and $\upsilon$ is the packet length. According to (\ref{delay_eqn}), the delay will be infinite when $\mu_{j_k,j_{k+1}}<\Psi_{j_k,j_{k+1}}$.

\subsubsection{Total utility}
Hence, the utility function $U_j(p_j, G)$ of UAV $j$ along path $p_j$ for $\mathcal{S}_j\neq \emptyset$, is defined as:
\begin{multline}\label{utility}
U_j(p_j, G)=\Big(R_{j}^{DL}(p_j,G) + R_{j}^{UL}(p_j,G)\Big)\\
+ \delta_j \Big(P_{j}^{DL}(p_j,G) + P_{j}^{UL}(p_j,G)\Big)\\- \gamma_j \Big(\tau_{j}^{DL}(p_j,G) + \tau_{j}^{UL}(p_j,G) \Big),
\end{multline}
\noindent where $\delta_j$ and $\gamma_j$ are multi-objective weights.

Note, here that there is no incentive for any UAV $j$ to be disconnected from the gateway node, otherwise, its delay cost and thus its utility function would be infinite. Therefore, for any network formation algorithm, the resulting tree graph of our proposed game is always connected.

\subsection{Pairwise stability}
Given the fact that, in network formation games, the consent of two nodes is required to form a single link, the stability of the outcome can be more accurately characterized by considering bilateral deviations.
To satisfy this requirement, we consider the notion of \emph{pairwise stability} that was introduced in~\cite{jackson_wolinsky}.

\begin{definition}
\emph{A network $G$ is} pairwise stable \emph{with respect to the proposed utility function $U_j(p_j,G)$ if:\begin{enumerate}
            \item for all $ji \in \mathcal{E}$, $U_j(p_j,G)\geq U_j(p_j-ji, G-ji)$ and $U_i(p_i,G)\geq U_i(p_i-ji,G-ji)$, and
            \item for all $ji \notin \mathcal{E}$, if $U_j(p_j+ji,G+ji) > U_j(G)$ then $U_i(p_i+ji,G+ji)< U_i(G)$,
          \end{enumerate}
where $G-ji$ refers to deleting link $ji$ from $G$ and $G+ji$ refers to adding link $ji$ to $G$.}
\end{definition}
\begin{definition}
\emph{When a network $G$ is not pairwise stable, it is said to be} defeated \emph{by $G'$ if either $G'=G+ij$  and 2) is violated for $ij$, or if $G'=G-ij$ and 1) is violated for $ij$}.
\end{definition}

Therefore, a given backhaul graph is pairwise stable if there is no incentive for any UAV $j$ to break a link that is formed with another UAV $i$ (unilateral deviation) and no pair of UAVs $j$ and $i$ have an incentive to establish a new link (bilateral deviation). Under pairwise stability, one can ensure that each UAV $j$ will not change its link formation strategy and therefore guarantee the promised performance for other UAVs in the network, and more specifically, to those connected to it or belong to its path $p_j$. Moreover, given that the graph resulting from our proposed network formation game is always a tree structure, $G$ is pairwise stable if and only if no pair of UAVs can profitably deviate by \emph{simultaneously} breaking one link and forming
another. In other words, given UAVs $j$ and $i$ and any link $jk \in \mathcal{E}$, let $G'=G-jk+ji$, $p_j'=p_j-jk+ji$ and $p_i'=p_i+ji$ then:
\begin{align}
U_{j}(p_j,G) < U_j(p_j',G') \Rightarrow U_{i}(p_i,G) > U_i(p_i',G').
\end{align}

Note, however, that pairwise stable networks may not always exist. In particular, this occurs when each network is defeated by some adjacent network, and that these \emph{improving paths}
form cycles with no undefeated networks existing.

\begin{definition}
\emph{An} improving path \emph{is a sequence of networks $\{G_1, G_2, \cdots, G_k\}$ where each network $G_k$ is defeated by the subsequent network $G_{k+1}$.}
\end{definition}

\begin{definition}
\emph{A} cycle \emph{is an improving path $\{G_1, G_2, \cdots, G_k\}$ such that $G_1=G_k$.}
\end{definition}

Consequently, a network is pairwise stable if and only if it has no improving paths emanating from it. In fact, for any network graph $G$, there exists either a pairwise stable network (or more) or a cycle of networks~\cite{pairwise_existence}. For network formation games, given that the strategy space is typically discrete, it is customary to characterize pairwise stable networks using an algorithmic approach, as the derivation of closed-form equilibrium policies is not possible~\cite{jackson_wolinsky}. As such, next, we propose a dynamic network formation algorithm that is guaranteed to reach a pairwise stable network upon convergence.

\section{Distributed Dynamic Network Formation}
\vspace{-0.1cm}
To allow UAVs to adapt their location based on the resulting formed graph, we first incorporate the notion of virtual (artificial) force field in our proposed network formation algorithm.

\subsection{Virtual force field}
Given the initial locations of the UAVs, the formation of an aerial backhaul network might not be feasible in case UAVs are located outside each others' communication range. Therefore, to adjust the location of UAVs based on the links they want to form, a dynamic and self-organizing approach that allows adaptation to the dynamics of the network, is necessary. In this regard, we adopt the notion of virtual forces for UAVs~\cite{virtual_force_robotics}. A virtual force field allows a UAV to adjust its location by exerting forces of attraction and repulsion towards other UAVs. For our model, we consider the SNR as a metric for updating the value of the virtual force vector. In particular, to guarantee an efficient backhauling process, a minimum threshold value of SNR, denoted as $\widehat{\Gamma}$, should be achieved over each of the formed links. This in turn allows the determination of the maximum distance between UAVs $j$ and $i$, $d^{\mathrm{max}}_{j,i}$.

\begin{proposition}
\emph{To guarantee a minimum threshold value of SNR between UAV $j$ and UAV $i$, the distance between the two UAVs should not exceed $d_{j,i}^{\mathrm{max}}$, which is defined as:}
\begin{align}
d_{j,i}^{\mathrm{max}}=\sqrt{\frac{P_{j,i}}{\widehat{\Gamma} \cdot \sigma^2 \cdot 10^{\eta_{\mathrm{LoS}}/10} \cdot (\frac{4 \pi f_c}{c})^2}},
\end{align}
\noindent \emph{where $P_{j,i}$ is the transmit power from UAV $j$ to UAV $i$ and $c$ is the speed of light.}
\end{proposition}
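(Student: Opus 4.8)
The plan is to invert the A2A SNR expression and solve for the inter-UAV distance. I would begin from the design requirement that the A2A SNR meet the prescribed threshold, $\Gamma_{j,i} \geq \widehat{\Gamma}$, and substitute the A2A SNR definition given earlier, $\Gamma_{j,i} = P_{j,i}/(10^{L_{j,i}/10}\cdot\sigma^2)$. Since all quantities involved are positive, a direct rearrangement yields an upper bound on the linear-scale path loss, namely $10^{L_{j,i}/10} \leq P_{j,i}/(\widehat{\Gamma}\cdot\sigma^2)$.

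The second step is to expand $L_{j,i}$ using the A2A model $L_{j,i} = \xi_{j,i} + \eta_{\mathrm{LoS}}$ together with the free-space path loss formula, giving $L_{j,i} = 20\log_{10}(d_{j,i}) + 20\log_{10}(f_c) - 147.55 + \eta_{\mathrm{LoS}}$. Converting this sum from dB to the linear domain turns the two logarithmic terms into the factors $d_{j,i}^2$ and $f_c^2$, and turns the additive attenuation $\eta_{\mathrm{LoS}}$ into the multiplicative factor $10^{\eta_{\mathrm{LoS}}/10}$.

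The one place where a small observation is needed — and essentially the only non-mechanical step — is recognizing that the constant $-147.55$ is precisely $20\log_{10}(4\pi/c)$, with $c$ the speed of light; numerically $20\log_{10}(4\pi/(3\times 10^8)) \approx -147.55$. Under this identity the dB constant converts to the multiplicative factor $(4\pi/c)^2$ in the linear scale, and merging it with the $f_c^2$ term produces exactly $(4\pi f_c/c)^2$. I would then collect all factors and isolate the distance, obtaining $d_{j,i}^2 \leq P_{j,i}/\big(\widehat{\Gamma}\cdot\sigma^2\cdot 10^{\eta_{\mathrm{LoS}}/10}\cdot(4\pi f_c/c)^2\big)$. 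Taking the nonnegative square root yields the claimed $d_{j,i}^{\mathrm{max}}$, and because every manipulation preserves the inequality direction, the condition $d_{j,i} \leq d_{j,i}^{\mathrm{max}}$ is equivalent to $\Gamma_{j,i} \geq \widehat{\Gamma}$, which establishes the proposition. I do not anticipate any genuine obstacle here: the result is a closed-form inversion of a monotone relationship, and the only subtlety is keeping track of the dB-to-linear conversions and spotting the $4\pi f_c/c$ identity hidden in the $-147.55$ constant.
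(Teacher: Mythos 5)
Your proposal is correct and follows exactly the route the paper intends: the paper's own proof is a one-line remark that the bound ``follows from the definition of the SNR $\Gamma_{j,i}$,'' and your derivation—inverting $\Gamma_{j,i}=P_{j,i}/(10^{L_{j,i}/10}\sigma^2)\geq\widehat{\Gamma}$, converting the dB path loss to linear scale, and identifying $-147.55\,\mathrm{dB}$ with $20\log_{10}(4\pi/c)$ to obtain the factor $(4\pi f_c/c)^2$—is precisely the omitted computation, carried out correctly.
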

\begin{proof}
The derivation of the expression of $d_{j,i}^{\mathrm{max}}$ follows from the definition of the SNR between UAV $j$ and UAV $i$, $\Gamma_{j,i}$.
\end{proof}

In fact, a virtual force can be expressed by a polar coordinate notation $(r, \theta)$ where $r$ is its magnitude and $\theta$ its orientation angle. It can act as an attractive or a repulsive force, adapting to the actions of each UAV. For our proposed model,
we consider an attractive virtual force from UAV $j$ towards UAV $i$, when both UAVs agree on the formation of link $ji$ but are out of each other's communication range. Therefore, the attractive force vector from UAV $j$ towards UAV $i$, $\overrightarrow{\boldsymbol{F}}_{j,i}^A$, is expressed as:
\begin{align}
\overrightarrow{\boldsymbol{F}}_{j,i}^A= \Big(u_A \cdot (d_{j,i}-d^{\mathrm{max}}_{j,i}), \theta_{j,i}\Big),
\end{align}
\noindent where $u_A$ corresponds to the virtual force attractive coefficient and $d_{j,i}$ is the Euclidean distance
between UAV $j$ and UAV $i$. 
On the other hand, a repulsive force is exerted from UAV $j$ towards UAV $i$, in case link $ji$ is deleted and is expressed as:
\begin{align}
\overrightarrow{\boldsymbol{F}}_{j,i}^{R1}= \Big(u_{R1} \cdot (d_{j,i}-d^{\mathrm{max}}_{j,i}), \theta_{j,i}+\pi \Big),
\end{align}
\noindent where $u_{R1}$ is the virtual force repulsive coefficient and $d_{j,i}$ is the Euclidean distance based on initial locations. Moreover, for physical collision avoidance between different UAVs, we define the following repulsive force from UAV $j$ towards UAV $i$:
\begin{align}
\overrightarrow{\boldsymbol{F}}_{j,i}^{R2}= \Big(u_{R2} \cdot \frac{1} {d_{j,i}}, \theta_{j,i}+\pi \Big),
\end{align}
\noindent where $u_{R2}$ corresponds to the virtual force repulsive coefficient for collision avoidance. Therefore, the total virtual force exerted from UAV $j$ on UAV $i$ can be written as:
\begin{align}
\overrightarrow{\boldsymbol{F}}_j = \sum_{i=1, i\neq j}^J \overrightarrow{\boldsymbol{F}}_{j,i}^A + \sum_{i=1, i\neq j}^J \overrightarrow{\boldsymbol{F}}_{j,i}^{R1} + \sum_{i=1, i\neq j}^J \overrightarrow{\boldsymbol{F}}_{j,i}^{R2},
\end{align}

In our model, we consider that UAVs broadcast their initial locations at $t=0$ and, hence, they can compute the corresponding virtual force vector even if they are not within each other's communication range. Therefore, given the strategies of each UAV $j$, its corresponding location is updated as follows:
\begin{align}
x_j'=x_j+\overrightarrow{\boldsymbol{F}}_{j}^x, \;\;y_j'=y_j+\overrightarrow{\boldsymbol{F}}_{j}^y, \mathrm{and}\;\;z_j'=z_j+\overrightarrow{\boldsymbol{F}}_{j}^z,
\end{align}
\noindent where $x_j$ and $x_j'$ are the initial and updated x-coordinate of UAV $j$, and $\overrightarrow{\boldsymbol{F}}_{j}^x$ is the x-component of $\overrightarrow{\boldsymbol{F}}_{j}$.
Consequently, this location update procedure improves the achievable data rate for each UAV along its path and thus ensures an efficient backhauling process.

\subsection{Dynamic network formation algorithm}
Taking into account the location update of UAVs based on the defined virtual forces, we propose a myopic dynamic network formation algorithm. In particular, myopic agents update their strategic decisions considering only the current state of the network without taking into account the future evolution of the network. To ensure the formation of a tree network architecture, link addition can be seen as link replacement and thus the strategy space of UAV $j$ can be regarded as either a delete operation or a \emph{replace} operation using which UAV $j$ replaces its previously connected link with its parent node (if it exists) with a new link. Let $\mathcal{W}$ denote the set of possible nodes with which UAV $j$ can possibly form or delete a link. We refer to $w \in \mathcal{W}$ as the \emph{activated} node which corresponds to any of the other $(J-1)$ UAVs or the gateway node $n$. The adopted rules for the formation of the undirected network graph are:
\begin{enumerate}
  \item UAV $j$ can add a link with node $w$ if both nodes $j$ and $w$ agree to add this link i.e., link addition is bilateral. Link $jw$ is formed via a link replacement strategy if $U_j(p_j-jl+jw, G-jl+jw) > U_j(p_j, G)$ and $U_w(p_w+jw, G-jl+jw)> U_w(p_w,G)$ where node $l$ corresponds to the parent node of UAV $j$ (if it exists). 
  \item UAV $j$ can delete link $jw$ if $U_j(p_j-jw, G-jw) > U_j(p_j, G)$ i.e., link deletion can be unilateral.
  \item Link replacement or deletion do not occur simultaneously.
\end{enumerate}
Note that the gateway node is considered to be a passive agent in our game.

For our network formation dynamics, we consider initially a star topology for $G_0$. Each iteration of our proposed algorithm consists of $J$ rounds during which the UAVs engage in the network formation game in an arbitrary but sequential order. At a given round, UAV $j$ chooses randomly (following a uniform distribution) another node $w$ and takes an action with respect to $w$. Following the network formation rules,
if link $jw$ exists between the two nodes, then node $j$ can delete this link if its beneficial for it. If link $jw$ is deleted, a repulsive force $\overrightarrow{\boldsymbol{F}}_{w,j}^R$ is exerted from UAV $w$ towards UAV $j$ thus returning UAV $w$ to its initial location, in case of location update during previous iterations. On the other hand, if link $jw$ does not exist, then UAV $j$
can split from its parent node $l$ and add link $jw$, if such a change is beneficial for both UAV $j$ and the activated node $w$. Here, both nodes $j$ and $w$ can communicate with each other via a direct temporarily communication link that is established in order to decide whether link $jw$ should be formed. Note that an attractive force $\overrightarrow{\boldsymbol{F}}_{j,w}^A$ is exerted from UAV $j$ towards node $w$ in case the corresponding two nodes are not within each other's communication range.
If, at the end of the round, both nodes agree on the formation of link $jw$, UAV $j$ updates its location to the current position. Otherwise, a repulsive force $\overrightarrow{\boldsymbol{F}}_{j,w}^R$ is exerted from UAV $j$ towards node $w$, thus returning UAV $j$ to its initial position at the beginning of this round. 
Note that $\overrightarrow{\boldsymbol{F}}_{j,w}^A$ and $\overrightarrow{\boldsymbol{F}}_{j,w}^R$ are exerted only when node $w$ is not the gateway node. At the end of each round, UAV $j$ and the activated node $w$ update their corresponding location and path and broadcast such information to all other UAVs. After the convergence of the network formation algorithm, the UAVs are connected through a tree structure rooted at the gateway node. Consequently, data packets from/to the SBSs to/from the core network can now be transmitted using the resulting formed network tree structure $G_{\textrm{final}}$. The convergence complexity of our proposed myopic network formation algorithm is $O(J^2)$. A summary of the proposed algorithm is given in Algorithm~\ref{algorithm}.

Given the definition of pairwise stability and the proposed network formation rules, it can be clearly seen that, if the network formation process converges to a final network $G$, then $G$ must be pairwise stable. However, proving the convergence of the network formation rules is challenging. In fact, if a pairwise stable network does not exist, then the proposed algorithm would involve cycles of networks which are randomly visited over time~\cite{pairwise_existence}. Therefore, using simulation, we show in the following section that our proposed algorithm will converge.


%

\begin{algorithm}[t!] \scriptsize
\caption{Proposed network formation algorithm.}
\label{algorithm}
\begin{algorithmic}[t!]
\STATE \textbf{Initialization:}\\ Consider initially a star network $G_0$ where each UAV $J$ is connected to the gateway node via a direct link.\\
\vspace{0.2cm}
\STATE \textbf{Myopic network formation:}
\WHILE{$G$ has not yet converged to a stable network,}
\vspace{0.05cm}
\STATE \emph{In a random but sequential order, the UAVs engage in the network formation game.}\\
\vspace{0.05cm}
\STATE \textbf{Step 1.} UAV $j$ activates another node $w$, in a random fashion but following a uniform distribution.
\vspace{0.1cm}
\IF {$jw \in \mathcal{E}$}
\STATE \textbf{Step 2.} UAV $j$ deletes link $jw$ if $U_j(p_j-jw, G-jw) > U_j(p_j, G)$.
\IF {link $jw$ is deleted}
\STATE \textbf{Step 3.} A repulsive virtual force $\overrightarrow{\boldsymbol{F}}_{w,j}^R$ is exerted from UAV $w$ towards UAV $j$ thus returning UAV $w$ to its initial location.
\ENDIF
\ENDIF
\vspace{0.1cm}
\IF {$jw \notin \mathcal{E}$}
\IF {UAV $j$ and node $w$ are not within each other's communication range}
\STATE \textbf{Step 4.} An attractive virtual force $\overrightarrow{\boldsymbol{F}}_{j,w}^A$ is exerted from UAV $j$ towards node $w$ thus updating the location of UAV $j$.
\ENDIF
\STATE \textbf{Step 5.} UAV $j$ establishes a temporarily communication link with node $w$.
\IF {$U_j(p_j-jl+jw, G-jl+jw) > U_j(p_j, G)$ and $U_w(p_w+jw, G-jl+jw)> U_w(p_w, G)$ where node $l$ corresponds to the parent node of UAV $j$
  (if it exists)}
\STATE \textbf{Step 6.} Link $jw$ is formed via a link replacement strategy.
\ELSE
\STATE \textbf{Step 7.} A repulsive virtual force $\overrightarrow{\boldsymbol{F}}_{j,w}^R$ is exerted from UAV $j$ towards node $w$ thus returning UAV $j$ to its initial location, in case of location update at Step 6.
\ENDIF
\ENDIF
\STATE \textbf{Step 8.} UAV $j$ and node $w$ broadcast their updated locations and paths to all other nodes in the network.

\ENDWHILE
\end{algorithmic}
\end{algorithm}

\section{Simulation Results and Analysis}
\vspace{-0.1cm}
For our simulations, we consider a $5$~km $\times$ $5$~km square
area in which we randomly deploy a number of SBSs and UAVs. Table~\ref{table_parameters}
summarizes the main simulation parameters. Note that the bandwidth per UAV is defined as the ratio of the total channel bandwidth $B$ to the number of UAVs. All statistical results are averaged over $1000$ independent runs.

\begin{table}[t!]\footnotesize
\setlength{\belowcaptionskip}{0pt}
\setlength{\abovedisplayskip}{3pt}
\captionsetup{belowskip=0pt}
\newcommand{\tabincell}[2]{\begin{tabular}{@{}#1@{}}#1.6\end{tabular}}
 \setlength{\abovecaptionskip}{2pt}
 \renewcommand{\captionlabelfont}{\small}
\caption[table]{\scriptsize{\\SYSTEM PARAMETERS}}\label{table_parameters}
\centering
\tabcolsep=0.11cm 
\scalebox{0.99}{
\begin{tabular}{|c|c|c|c|}
\hline
\textbf{Parameters} & \textbf{Values} & \textbf{Parameters} & \textbf{Values} \\
\hline
Max transmit power $(P_o)$ & 20 dBm & $\eta_{\mathrm{LoS}}$ & 5 dB\\
\hline
SNR threshold $(\widehat{{\Gamma}})$ & -4 dB & $\eta_{\mathrm{NLoS}}$ & 20 dB\\
\hline
Speed of light ($c$) & $3\times10^8$ m/s & $C$ & 11.9\\ \hline
Channel bandwidth $(B)$ & 40 MHz & $D$ & 0.13 \\
\hline
Noise variance $(\sigma^2)$ & -90 dBm &$u_A$&1\\
\hline
Carrier frequency $(f_c)$& 2 GHz & $u_{R1}, u_{R2}$ & 10\\
\hline
Packet arrival rate $(\lambda_s)$& (0, 1)&Packet size $(\upsilon)$ & 2000 bits \\
\hline
\end{tabular}
}
\vspace{-0.24cm}
\end{table}

Fig.~\ref{snapshot} shows a snapshot of the tree graph resulting from the proposed algorithm for a network with $J=10$ randomly deployed UAVs. From Fig.~\ref{snapshot}, we can see that most of the UAVs that are located far from the gateway engage in a multi-hop transmission with other UAVs that are located closer to the gateway thus extending the communication range of the network. Moreover, from this snapshot, we can see that the UAVs select their paths not only based on distance but also on the number of hops and traffic over a given path. For instance, UAV $9$ connects to UAV $6$, although UAV $7$ is closer. This is due to the fact that the path for UAV $9$ along UAV $7$ involves $5$ hops and is more congested as compared to $3$ hops and less traffic when connected to UAV $6$. This in turn decreases the latency along its path and thus improves its utility. From Fig.~\ref{snapshot}, we can also see the effect of the virtual force vector on the location of the UAVs. For instance, UAVs $3$ and $5$ adjust their initial location in order to guarantee an efficient communication link with UAV $6$. Here, note that one could deploy more UAVs in case the location update of a particular UAV causes severe degradation in the A2G link connecting it to its serving SBSs.

\begin{figure}[t!]
  \begin{center}
  \vspace{-0.5cm}
    \includegraphics[scale=0.33]{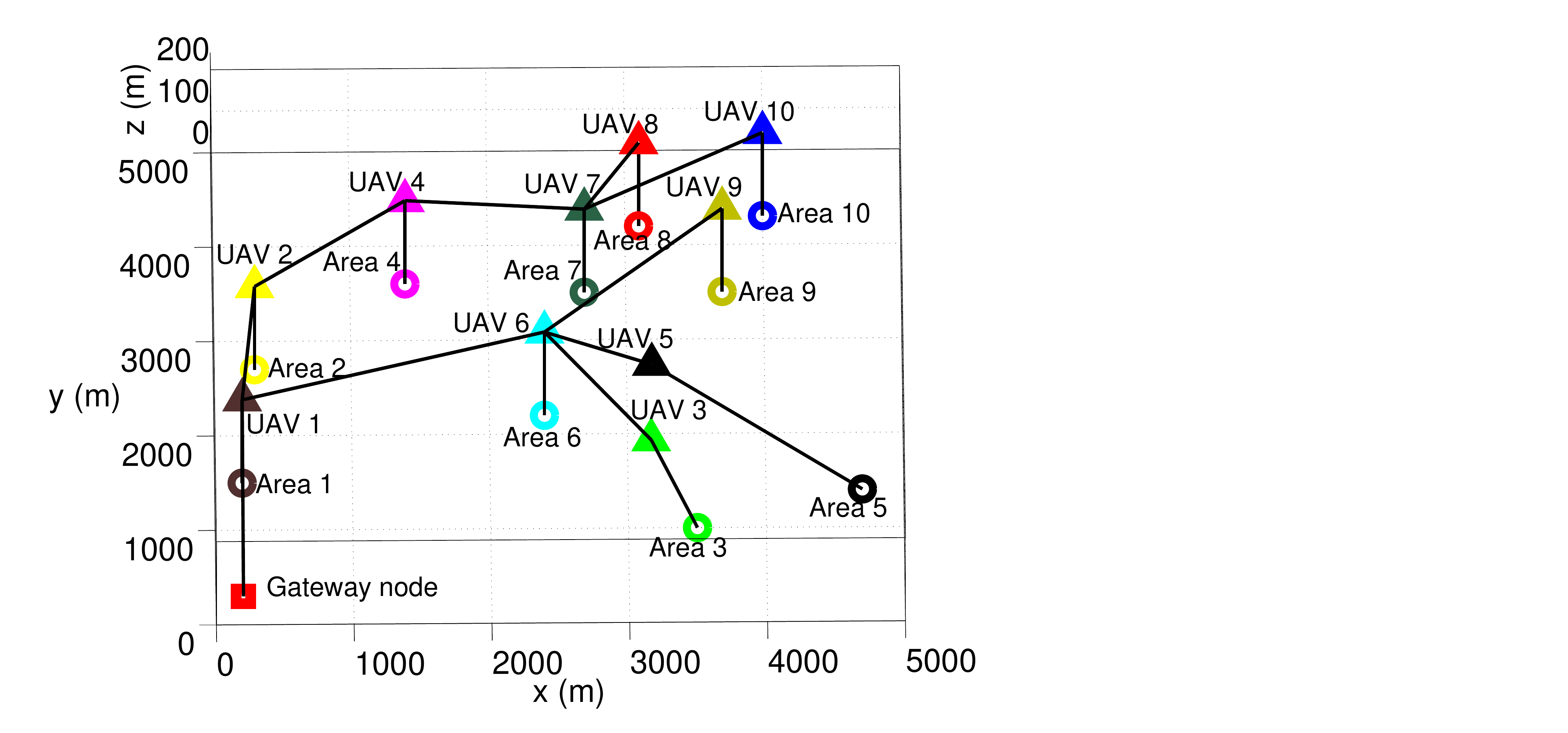}
   \caption{Snapshot of a tree graph formed using the proposed algorithm for a network with $J=10$ randomly deployed UAVs. Circles represent target areas having one or multiple SBSs.}\label{snapshot}
     \vspace{-0.6cm}
  \end{center}
\end{figure}

\begin{figure}[t!]
  \begin{center}
  \vspace{-0.51cm}
    \includegraphics[scale=0.23]{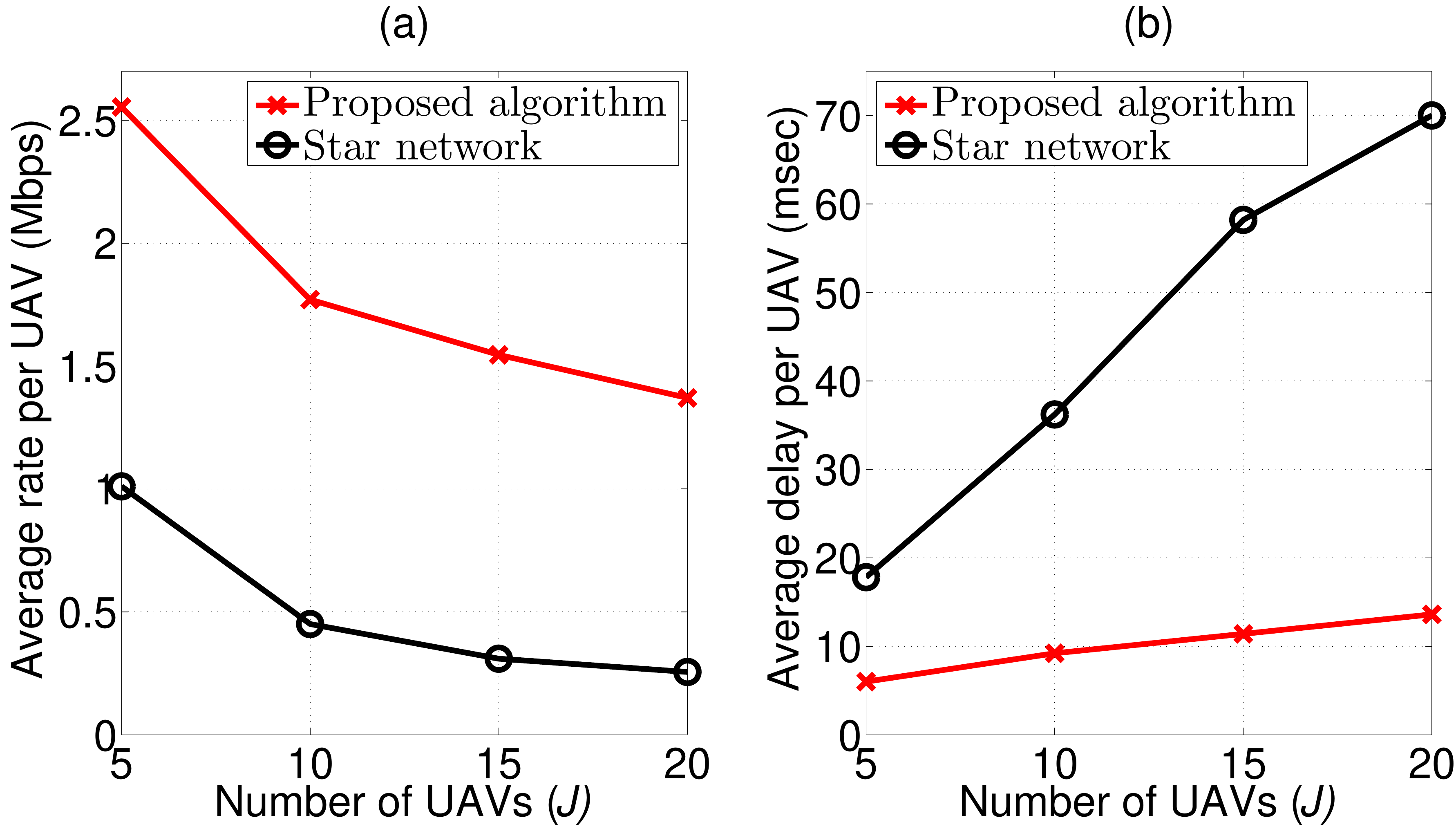}
   \caption{Performance assessment of the proposed network formation algorithm in terms of average (a) rate and (b) delay per UAV as compared to the star network, for different number of UAVs.}\label{performance}
     \vspace{-0.5cm}
  \end{center}
\end{figure}

Fig.~\ref{performance} shows the average achievable rate and delay per UAV of the resulting network for our proposed scheme and the direct transmission approach considering a star topology. From Fig.~\ref{performance}, we can see that, at all network sizes, the
proposed network formation algorithm yields significant
performance gains in terms of both rate and delay reaching, respectively, up to $40$\% and $41$\% relative to the star network (for a network with $15$ UAVs). The reason for this gain stems from the fact that multi-hop transmission allows UAVs having bad channel conditions with the gateway node to form links with other UAVs having better channel conditions. Here, note that the rate of the A2A links is higher than that of the A2G links due to the availability of a LoS communication links between different UAVs as well as the orthogonal channel allocation. Therefore, although more hops are formed, the average achieved rate over the multi-hop path is improved as compared to a direct link having weaker channel conditions. This in turn results in a higher service rate and thus a lower delay over the formed path. Here, note that the transmission bandwidth of each UAV is a function of the number of UAVs in the network. This in turn justifies the decrease in the average rate per UAV for both schemes as the number of UAVs increases. Fig.~\ref{performance} (b) demonstrates that, although the delay for both schemes increases as the number of UAVs in the network increases from $5$ to $20$, the speed at which the delay increases for our proposed scheme ($12.6$\%) is much smaller compared to that of the star network ($29.3$\%). This is due to the fact that, for a given UAV $j$, the number of possible paths to the gateway node increases as the number of UAVs increases.



Fig.~\ref{iteration} shows the minimum, average, and maximum number of iterations needed till convergence of our proposed network formation algorithm as the number of UAVs increases. From Fig.~\ref{iteration}, we can see that our proposed network formation algorithm converges after a number of iterations and therefore a stable graph is reached. Moreover, we can note that as the number of UAVs increases, the total number of iterations required for the convergence of the algorithm increases. This result is due to the fact that, as $J$ increases, the number of possible activated nodes $w$ for a particular UAV $j$ increases, and, thus, more actions (i.e., iterations) are required prior to convergence. For instance, the minimum, average and maximum number of iterations vary, respectively, from $4$, $7.2$, and $23$ at $J=5$ UAVs up to $18$, $81$, and $170$ at $J=20$ UAVs. Here, it is worth noting that practical UAV-based backhaul solutions will typically use only a relatively small number of UAVs, thus the convergence time resulting from our approach is practically reasonable.

\begin{figure}[t!]
  \begin{center}
  \vspace{-0.4cm}
    \includegraphics[scale=0.24]{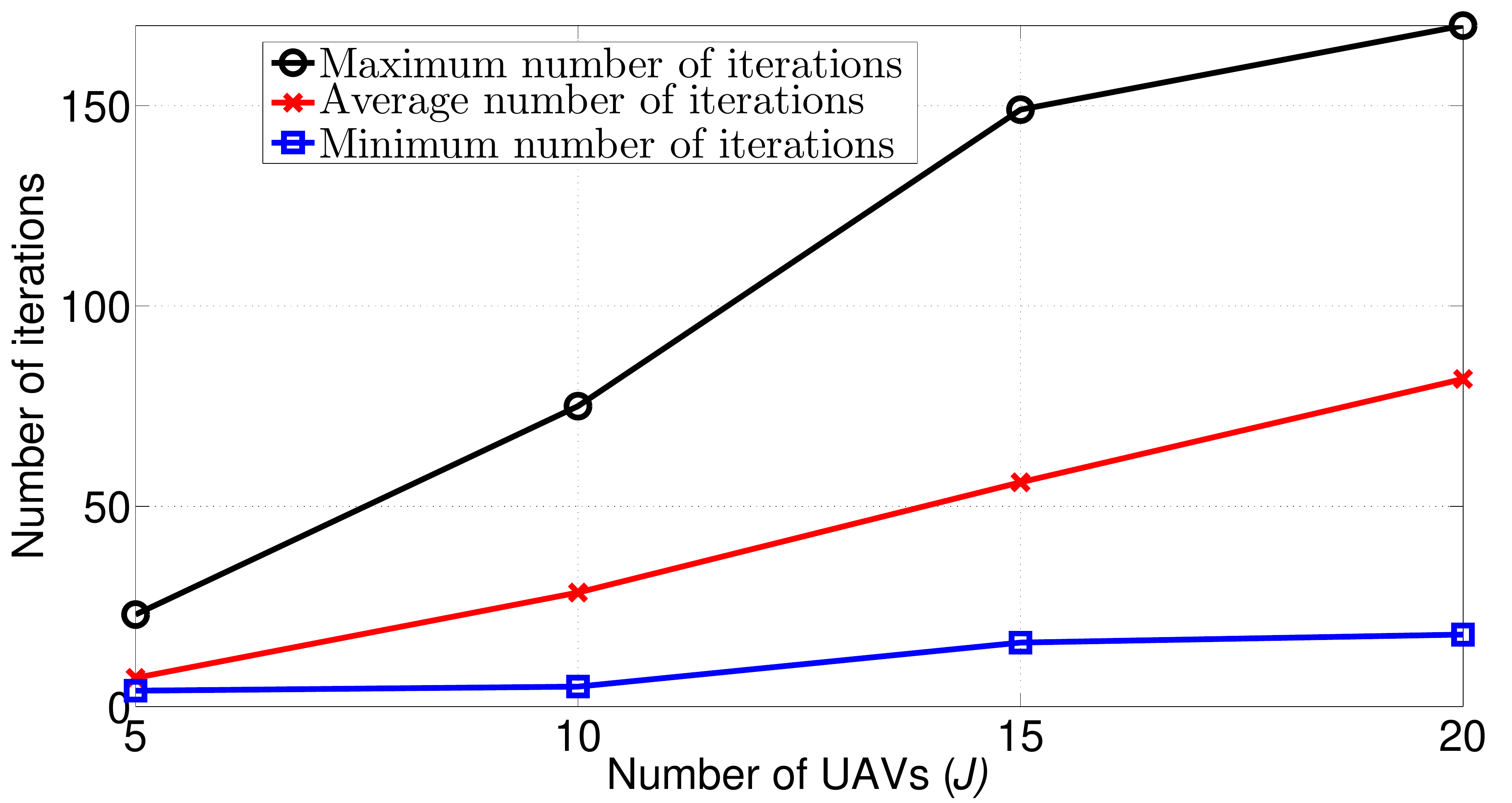}
   \caption{Minimum, average, and maximum number of iterations till convergence as a function of the number of UAVs $J$ in the network.}\label{iteration}
     \vspace{-0.5cm}
  \end{center}
\end{figure}


\section{Conclusion}
\vspace{-0.1cm}
In this paper, we have proposed a novel UAV-based backhaul network design for wireless networks. We have formulated the problem as a network formation game in which the UAVs seek to form a multi-hop aerial network that connects SBSs to the core network. In particular, each UAV can take an individual decision to optimize its utility by exploiting the possible paths that connects it to the gateway node. To solve the game, we have proposed a distributed myopic algorithm which is guaranteed to reach a pairwise stable network if converged. Simulation results have shown that the proposed approach yields significant performance gains in terms of delay and rate.

\def\baselinestretch{0.8}
\bibliographystyle{IEEEtran}
\bibliography{references}

\begin{thebibliography}{10}
\providecommand{\url}[1]{#1}
\csname url@samestyle\endcsname
\providecommand{\newblock}{\relax}
\providecommand{\bibinfo}[2]{#2}
\providecommand{\BIBentrySTDinterwordspacing}{\spaceskip=0pt\relax}
\providecommand{\BIBentryALTinterwordstretchfactor}{4}
\providecommand{\BIBentryALTinterwordspacing}{\spaceskip=\fontdimen2\font plus
\BIBentryALTinterwordstretchfactor\fontdimen3\font minus
  \fontdimen4\font\relax}
\providecommand{\BIBforeignlanguage}[2]{{%
\expandafter\ifx\csname l@#1\endcsname\relax
\typeout{** WARNING: IEEEtran.bst: No hyphenation pattern has been}%
\typeout{** loaded for the language `#1'. Using the pattern for}%
\typeout{** the default language instead.}%
\else
\language=\csname l@#1\endcsname
\fi
#2}}
\providecommand{\BIBdecl}{\relax}
\BIBdecl

\bibitem{backhaul}
U.~Siddique, H.~Tabassum, E.~Hossain, and D.~I. Kim, ``Wireless backhauling of
  {5G} small cells: Challenges and solution approaches,'' \emph{IEEE Wireless
  Communications}, vol.~22, no.~5, pp. 22--31, Oct. 2015.

\bibitem{mona}
M.~Jaber, M.~A. Imran, R.~Tafazoli, and A.~Tukmanov, ``{5G} backhaul challenges
  and emerging research directions: A survey,'' \emph{IEEE Access}, vol.~4, pp.
  1743--1766, April 2016.

\bibitem{omid}
O.~Semiari, W.~Saad, Z.~Dawy, and M.~Bennis, ``Matching theory for backhaul
  management in small cell networks with {mmWave} capabilities,'' in
  \emph{Proceedings of the IEEE International Conference on Communications
  (ICC)}.\hskip 1em plus 0.5em minus 0.4em\relax London, UK, June 2015.

\bibitem{backhaul_UAV_2}
F.~Ahdi and S.~Subramanian, ``Using unmanned aerial vehicles as relays in
  wireless balloon networks,'' in \emph{Proceedings of IEEE International
  Conference on Communications (ICC)}.\hskip 1em plus 0.5em minus 0.4em\relax
  London, UK, Sept. 2015.

\bibitem{mohammad_D2D}
M.~Mozaffari, W.~Saad, M.~Bennis, and M.~Debbah, ``Unmanned aerial vehicle with
  underlaid device-to-device communications: Performance and tradeoffs,''
  \emph{IEEE Transactions on Wireless Communications}, vol.~15, no.~6, pp.
  3949--3963, June 2016.

\bibitem{mingzhe}
M.~Chen, M.~Mozaffari, W.~Saad, C.~Yin, M.~Debbah, and C.~Hong, ``Caching in
  the sky: Proactive deployment of cache-enabled unmanned aerial vehicles for
  optimized quality-of-experience,'' \emph{IEEE Journal on Selected Areas in
  Communications}, vol.~35, no.~5, pp. 1046--1061, May 2017.

\bibitem{IMP_relays_potential}
A.~Kopeikin, S.~Ponda, and G.~Inalhan, \emph{Control of Communication Networks
  for Teams of {UAVs}}.\hskip 1em plus 0.5em minus 0.4em\relax Dordrecht:
  Springer Netherlands, Aug. 2015, pp. 1619--1654.

\bibitem{moving_relay_1}
F.~Ono, H.~Ochiai, and R.~Miura, ``A wireless relay network based on unmanned
  aircraft system with rate optimization,'' \emph{IEEE Transactions on Wireless
  Communications}, vol.~15, no.~11, pp. 7699--7708, Nov. 2016.

\bibitem{moving_relay_2}
Y.~Zeng, R.~Zhang, and T.~J. Lim, ``Throughput maximization for {UAV} enabled
  mobile relaying systems,'' \emph{IEEE Transactions on Communications},
  vol.~64, no.~12, pp. 4983--4996, Dec. 2016.

\bibitem{backhaul_UAV}
M.~Alzenad, M.~Shakir, H.~Yanikomeroglu, and M.~Alouini, ``{FSO}-based vertical
  backhaul/fronthaul framework for {5G+} wireless networks,'' in
  \emph{arXiv:1607.01472}, 2016.

\bibitem{mohammad}
M.~Mozaffari, W.~Saad, M.~Bennis, and M.~Debbah, ``Mobile internet of things:
  Can {UAVs} provide an energy-efficient mobile architecture?'' in
  \emph{Proceedings of IEEE Global Communications Conference (Globecom)}.\hskip
  1em plus 0.5em minus 0.4em\relax Washington, DC, USA, Dec. 2016.

\bibitem{UAV_survey}
Y.~Zeng, R.~Zhang, and T.~Lim, ``Wireless communications with unmanned aerial
  vehicles: Opportunities and challenges,'' \emph{IEEE Communications
  Magazine}, vol.~54, no.~5, pp. 36--42, May 2016.

\bibitem{hourani}
A.~Hourani, S.~Kandeepan, and A.~Jamalipour, ``Modeling air-to-ground path loss
  for low altitude platforms in urban environments,'' in \emph{Proceedings of
  IEEE Global Communications Conference (GLOBECOM)}.\hskip 1em plus 0.5em minus
  0.4em\relax Austin, TX, USA, Dec. 2014.

\bibitem{multihop_capacity}
M.~Sikora, J.~Laneman, M.~Haenggi, D.~Costello, and T.~Fuja, ``Bandwidth- and
  power-efficient routing in linear wireless networks,'' \emph{IEEE
  Transactions on Information Theory}, vol.~52, no.~6, pp. 2624--2633, June
  2006.

\bibitem{jackson_wolinsky}
M.~O. Jackson and A.~Wolinsky, ``A strategic model of social and economic
  networks,'' \emph{Journal of Economic Theory}, vol.~71, no.~1, pp. 44--74,
  Oct. 1996.

\bibitem{walid_networkformation}
W.~Saad, Z.~Han, T.~Ba\c{s}ar, M.~Debbah, and A.~Hj{\o}rungnes, ``Network
  formation games among relay stations in next generation wireless networks,''
  \emph{IEEE Transactions on Communications}, vol.~59, no.~9, pp. 2528--2542,
  Sept. 2011.

\bibitem{kleinrock}
D.~Bertsekas and R.~Gallager, \emph{Data networks}.\hskip 1em plus 0.5em minus
  0.4em\relax Prentice Hall, Mar. 1992.

\bibitem{pairwise_existence}
M.~Jackson and A.~Watts, ``The existence of pairwise stable networks,''
  \emph{Seoul Journal of Economics}, vol.~14, no.~3, pp. 299--321, 2001.

\bibitem{virtual_force_robotics}
A.~Howard, M.~J. Mataric, and G.~S. Sukhatme, ``Mobile sensor network
  deployment using potential field: a distributed scalable solution to the area
  coverage problem,'' in \emph{Proceedings of the International Conference on
  Distributed Autonomous Robotic Systems}.\hskip 1em plus 0.5em minus
  0.4em\relax Fukuoka, Japan, June 2002.

\end{thebibliography}
\end{document}